\DeclarePairedDelimiterX{\infdivx}[2]{(}{)}{%
  #1\;\delimsize\|\;#2%
}
\newtheorem{thm}{Theorem}
\newtheorem{lem}{Lemma}
\newtheorem{fac}{Fact}
\newtheorem{con}{Conjecture}
\newtheorem{theorem}{Theorem}
\begin{document}

\title{Randomness cost of masking quantum information \\and the information conservation law}

\author{Seok Hyung Lie}
\affiliation{
 Department of Physics and Astronomy, Seoul National University, Seoul, 151-742, Korea
}%
\author{Hyunseok Jeong}
\affiliation{
Department of Physics and Astronomy, Seoul National University, Seoul, 151-742, Korea
}%

\date{\today}

\begin{abstract}
  Masking quantum information, which is impossible without randomness as a resource, is a task that encodes quantum information into bipartite quantum state while forbidding local parties from accessing  to that information. In this work, we disprove the geometric conjecture about unitarily maskable states [K. Modi et al., Phys. Rev. Lett. {\bf 120}, 230501 (2018)], and make an algebraic analysis of quantum masking. First, we show a general result on quantum channel mixing that a subchannel's mixing probability should be suppressed if its classical capacity is larger than the mixed channel's capacity. This constraint combined with the well-known  \textit{information conservation law}, a law that does not exist in classical information theory, gives a lower bound of randomness cost of masking quantum information as a monotone decreasing function of evenness of information distribution. This result provides a consistency test for various scenarios of fast scrambling conjecture on the black hole evaporation process. The results given here are robust to incompleteness of quantum masking.
  
\end{abstract}

\pacs{Valid PACS appear here}
\maketitle
    How can one hide information  from two parties holding its containers? Na\"{i}vely, one can tear the piece of paper containing the information into two pieces and distribute them to two parties so that it remains recoverable when the pieces are gathered together at later point of time. However, this method leaks some amount of information to each party. To hide $n$ bits of classical information completely, one needs to get $n$ bits of classical information (`randomness') maximally correlated with it, for example, by using one-time pad cipher \cite{shannon1949communication}. Is it possible to hide $n$ qubits of quantum information by making $n$ qubits quantumly correlated with it? The no-hiding theorem and the no-masking theorem answer this question negatively for pure state case \cite{braunstein2007quantum,modi2018masking}. In our previous work, we showed that one still needs additional $n$ bits of additional randomness to hide $n$ qubits of quantum information \cite{lie2019unconditionally}. Results of \cite{lie2019unconditionally} imply that different types of quantum correlation allowed in quantum masking processes require different amount of randomness consumption, but exact relation was still an open problem.

In this work, we illuminate an information theoretical reason behind these phenomena and specify the backbone of the mechanism; the \textit{information conservation law} of quantum mechanics. The information theoretical investigation yields a lower bound of the amount of randomness required for information masking task in terms of a measure of how unevenly information is distributed between the system and the environment. This suggests that important quantity for determining the minimal randomness consumption is not the exact amount of quantum correlation allowed between two parties but the unevenness of information distribution between two parties. Our result has implications for quantum secret sharing protocols and a class of proposals for resolution of the black hole information paradox called \textit{scrambling} \cite{hayden2007black}. 

\textbf{Introduction} Let $\mathcal{B}(\mathcal{H})$ be the Banach space of bounded operators on the Hilbert space $\mathcal{H}$ and $\mathcal{S}(\mathcal{H})$ be the set of quantum states on the Hilbert space $\mathcal{H}$.
A \textit{universal masking process} or \textit{universal quantum masker} \cite{modi2018masking,lie2019unconditionally} $\Phi_M : \mathcal{B}(\mathcal{H}_I) \to \mathcal{B}(\mathcal{H}_A \otimes \mathcal{H}_B)$ is an invertible quantum map that has constant marginal states,
\begin{equation}
    \forall \rho \in \mathcal{S}(\mathcal{H}), \;\; \Tr_B\Phi_M(\rho)=\sigma_A \;\;\text{and}\;\; \Tr_A\Phi_M(\rho)=\sigma_B,
\end{equation}
namely, a quantum masker distributes quantum state to two parties so that each party has no access to any information about the original quantum state. If it cannot mask all the states in $\mathcal{S(H)}$, then it is not universal.  Its invertibility allows the following expression \cite{nayak2007invertible},
\begin{equation} \label{rev}
    \Phi_M(\rho)=M(\rho_I \otimes \sigma_S)M^\dag.
\end{equation}
with some isometry $M:\mathcal{H}_I\otimes \mathcal{H}_S \to \mathcal{H}_A \otimes \mathcal{H}_B$ and a quantum state $\sigma_S$. We call the state $\sigma_S$ the \textit{safe state} of the masking process and we interpret it as a quasiclassical randomness source that is needed to mask the quantum information. Our interest is to investigate the relation between the\textit{ randomness cost }of masking process, that is, for any pure state $\ket{\phi} \in \mathcal{H_I}$,
\begin{equation}
    \mathcal{R}(\Phi_M):= S(\Phi_M(\dyad{\phi}_I)) = S(\sigma_S),
\end{equation}
where $S$ is the von Neumann entropy, and the characteristics of quantum interaction $M$ of the process. The following result on the size of each party is known \cite{gottesman2000theory, imai2005information, lie2019unconditionally}.

\begin{fac}[The generalized quantum masking theorem]
For a universal quantum masker $\Phi_M : \mathcal{B}(\mathcal{H}_I) \to \mathcal{B}(\mathcal{H}_A \otimes \mathcal{H}_B)$ with constant marginal states $\sigma_A$ and $\sigma_B$ for systems $A$ and $B$ and the safe state $\sigma_S$, we have
$$\min\{S(\sigma_A),S(\sigma_B),S(\sigma_S)\}\geq \log d.$$
\end{fac}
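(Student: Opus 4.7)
The plan is to introduce a reference system $R$ of dimension $d$ and a purifying system $S'$ for the safe state, so that applying $M\otimes I_{RS'}$ to $\ket{\Phi^+}_{IR}\otimes\ket{\sigma}_{SS'}$---with $\ket{\Phi^+}$ maximally entangled and $\ket{\sigma}_{SS'}$ a purification of $\sigma_S$---produces a four-party pure state $\ket{\Psi}_{ABRS'}$. I will analyse reduced marginals of $\ket{\Psi}$ together with the isometric structure of $M$, extracting three linear inequalities among $S(\sigma_A)$, $S(\sigma_B)$, $S(\sigma_S)$, and $\log d$; each of the three desired lower bounds then falls out by pairing two of them.

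The first two inequalities express the no-information property of each marginal. The linear map $X\mapsto\Tr_B\Phi_M(X)$ is constant on the state space by the universal-masking hypothesis, so the fact that states affinely span all trace-$1$ Hermitian operators forces $\Tr_B\Phi_M(X)=\Tr(X)\,\sigma_A$ on all of $\mathcal{B}(\mathcal{H}_I)$. Expanding $\dyad{\Phi^+}_{IR}=\frac{1}{d}\sum_{i,j}\dyad{i}{j}_I\otimes\dyad{i}{j}_R$ and partial-tracing $B$ then gives $\rho_{AR}=\sigma_A\otimes\pi_R$ with $\pi_R=I_R/d$. Purity of $\ket{\Psi}$ implies $S(AR)=S(BS')$, and subadditivity gives $S(BS')\leq S(\sigma_B)+S(\sigma_S)$, hence
\[
S(\sigma_A)+\log d \;\leq\; S(\sigma_B)+S(\sigma_S);
\]
by the symmetric argument, $S(\sigma_B)+\log d\leq S(\sigma_A)+S(\sigma_S)$.

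For the third inequality I would exploit that $M$ is an isometry. Applying $\Phi_M$ to the maximally mixed input $\pi_I=I_I/d$ yields $\bar\rho_{AB}=M(\pi_I\otimes\sigma_S)M^\dag$, which is isospectral to $\pi_I\otimes\sigma_S$ and therefore has entropy $\log d+S(\sigma_S)$; its marginals remain $\sigma_A$ and $\sigma_B$ by universal masking, so subadditivity delivers
\[
\log d+S(\sigma_S)\;\leq\;S(\sigma_A)+S(\sigma_B).
\]
Adding any two of the three displayed inequalities cancels two entropies on each side, leaving $2\log d\leq 2\,S(\sigma_{\bullet})$ for the system left over, which proves all three bounds simultaneously.

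The step I expect to be the main obstacle is not any single computation but recognising that the third inequality is needed: the two no-signaling-style relations only constrain $S(\sigma_A)$ and $S(\sigma_B)$ through $S(\sigma_S)$ and by themselves are consistent with all three entropies being arbitrarily small. Global entropy preservation of the isometric dilation $M$, probed by the maximally mixed input, is what injects the explicit $\log d$ into the bound and closes the loop.
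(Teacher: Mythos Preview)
Your argument is correct: the three entropic inequalities you derive from (i) the product form $\rho_{AR}=\sigma_A\otimes\pi_R$ together with purity and subadditivity, (ii) its $A\leftrightarrow B$ counterpart, and (iii) entropy preservation under the isometry $M$ applied to $\pi_I\otimes\sigma_S$, combine pairwise to give $S(\sigma_A),S(\sigma_B),S(\sigma_S)\geq\log d$ exactly as you claim.

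The paper, however, does not prove this statement directly. It records it as a known fact from the literature and then observes, after Theorem~\ref{thm:naive}, that the bound $S(\sigma_S)\geq\log d$ is recovered from that theorem together with the information conservation law: since $\sum_i p_i I(R:A)_i+\sum_i p_i I(R:B)_i=2\log d$, the larger of the two averages is at least $\log d$, and Theorem~\ref{thm:naive} then forces $\mathcal{R}(\Phi_M)=S(\sigma_S)\geq\log d$. That route passes through the bipartite-embedding decomposition $\Phi_M(\rho)=\sum_i p_i M_i\rho M_i^\dag$ and the concavity/mixture bounds on von Neumann entropy, and it only explicitly delivers the $\sigma_S$ bound (the $\sigma_A,\sigma_B$ bounds are left to the cited references). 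Your proof is genuinely different and in some ways cleaner: it is self-contained, treats all three entropies symmetrically, and needs only purification, subadditivity, and spectrum preservation under isometries, with no recourse to the subchannel decomposition. The paper's route, on the other hand, is a special case of machinery that goes on to yield the sharper Theorems~\ref{thm:inq} and~\ref{thm:tradeoff}.
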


If $\sigma_S=\sum_i p_i \dyad{i}_S$ is the spectral decomposition of $\sigma_S$, then $\Phi_M$ can be expressed as a random isometry operation,

\begin{equation}
    \Phi_M(\rho)=\sum_i p_i M_i \rho M_i^\dag,
\end{equation}
where $M_i:=M (\mathds{1}_I \otimes \ket{i}_S)$ are isometries from $\mathcal{H}_I$ to $\mathcal{H}_A \otimes \mathcal{H}_B$. We will call such isometry a \textit{bipartite embedding}. These isometries also have orthogonal images ($M_i^\dag M_j =0$ if $i \neq j)$. Such decomposition of $\Phi_M$ is unique up to degeneracy of spectrum of the safe state $\sigma_S$.

In quantum information theory, there is a unique conservation law of information in contrast with classical information theory. When an entangled state $\ket{\Psi}_{RI}$ undergoes an arbitrary isometry $U:\mathcal{H}_I \to \mathcal{H}_A \otimes \mathcal{H}_B$, then the system $R$'s mutual information with two output systems $A$ and $B$ are conserved in the following sense.

\begin{lem}[Information Conservation Law]
\begin{equation} \label{eqn:infocon}
    2S(R) = I(R:A) + I(R:B).
\end{equation}
\end{lem}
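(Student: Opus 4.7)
The plan is to derive the equation as a one-line consequence of purity of the global state combined with the elementary fact that for a pure bipartite state the two reductions have equal von Neumann entropy. First, I would observe that without loss of generality $\ket{\Psi}_{RI}$ can be taken to be pure on $RI$: any state on $I$ admits a purification, and $R$ is precisely the reference system that makes it so. Because $U:\mathcal{H}_I\to\mathcal{H}_A\otimes\mathcal{H}_B$ is an isometry acting only on $I$, the post-interaction state
\begin{equation}
\ket{\Phi}_{RAB}:=(\mathds{1}_R\otimes U)\ket{\Psi}_{RI}
\end{equation}
is again pure, now on the tripartite system $RAB$.

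Next, from purity of $\ket{\Phi}_{RAB}$ I would read off the three complementary-entropy identities $S(RA)=S(B)$, $S(RB)=S(A)$, and $S(R)=S(AB)$. Substituting the first two into the definitions of the mutual informations,
\begin{equation}
I(R:A)=S(R)+S(A)-S(RA),\qquad I(R:B)=S(R)+S(B)-S(RB),
\end{equation}
gives $I(R:A)=S(R)+S(A)-S(B)$ and $I(R:B)=S(R)+S(B)-S(A)$. Adding these, the marginal terms $S(A)$ and $S(B)$ cancel pairwise and one is left with $I(R:A)+I(R:B)=2S(R)$, which is the claim.

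There is essentially no obstacle beyond bookkeeping. The statement is an equality rather than an inequality, so no deep entropic machinery such as subadditivity or strong subadditivity is required; every step is forced by purity. The only point that must be handled carefully is the reduction to the pure case, but this is harmless since enlarging $R$ to purify the input only strengthens the assumption and does not affect $S(R)$, $S(A)$, or $S(B)$ as they are read off from $\ket{\Phi}_{RAB}$.
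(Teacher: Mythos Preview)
Your argument is correct and is exactly the standard derivation: purity of $(\mathds{1}_R\otimes U)\ket{\Psi}_{RI}$ gives $S(RA)=S(B)$ and $S(RB)=S(A)$, and substituting into the definitions of $I(R:A)$ and $I(R:B)$ makes the marginal entropies cancel. The paper does not supply a proof of this lemma at all---it is stated as a known fact and then used (the identity $I(R:A)_i=S(R)+S(A)_i-S(B)_i$ appears verbatim in the proof of Theorem~\ref{thm_a:naive})---so your write-up is entirely in line with what the authors have in mind.

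One small remark: your final paragraph about ``reduction to the pure case'' is unnecessary and slightly misleading. The hypothesis already writes the input as a ket $\ket{\Psi}_{RI}$, so purity is assumed; and if one did have to enlarge $R$ to purify a mixed input, $S(R)$ \emph{would} change, so the identity would then be about the enlarged reference, not the original one. Simply drop that caveat and the proof is clean.
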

 Note that this simple law alone can derive both the \textit{no-hiding theorem} \cite{braunstein2007quantum} and the \textit{no-masking theorem} \cite{modi2018masking}. If the bipartite embedding $U:\mathcal{H}_I \to \mathcal{H}_A \otimes \mathcal{H}_B$ is a hiding or masking process so that output system $A$ is in a constant quantum state regardless of input state of the system $I$, then for a maximally entangled state $\ket{\Gamma}_{RI}$ with $\dim (\mathcal{H}_I)=\dim(\mathcal{H}_R)=d$, we have $I(R:A)=0$ as the systems $R$ and $A$ are in a product state. But this immediately implies $I(R:B)=2\log d >0$, i.e., the system $B$ is maximally entangled with $R$. This implies that there is no bipartite embedding that can hide quantum information from both parties. From this observation we can expect the information conservation law could give an insight on the generalized quantum masking theorem.

We can interpret $I(R:A)$ as the information flow from $I$ to $A$ when $\Tr_B (U \cdot U^\dag)$ is considered a channel from $I$ to $A$. We can intuitively anticipate that if too much information has flowed to a single system, then a large amount of randomness is needed to `scramble' to mask that information, just as it is for the classical one-time pad cipher. In the following section, we first prove an easily applicable result that verifies this intuition from simple entropic properties, and again prove a stronger theorem from information-theoretic argument.

\textbf{Main results} From now on, we will fix a $d-$dimensional universal quantum masker $\Phi_M$ and its bipartite embedding decomposition $\Phi_M(\rho) = \sum_i p_i M_i \rho M_i^\dag$. Also every entropic quantities with subscript $i$ refers to the corresponding quantity for the pure state $(\mathds{1}_R \otimes M_i)\ket{\Gamma}_{RI}$ of the tripartite system $RAB$ and unindexed entropic quantities such as $S(X)$  are the corresponding values for the system $X$ in the state $(\mathds{1}_R \otimes \Phi_M)(\dyad{\Gamma}_{RI})$ of the tripartite system $RAB$. For every quantum channel $\mathcal{N}$ into $\mathcal{H}_A\otimes\mathcal{H}_B$, we will denote its partial traces as $\Tr_B \circ \mathcal{N} = \mathcal{N}^A$ and $\Tr_A \circ \mathcal{N} = \mathcal{N}^B$.

We first investigate the power of isometry quantum masker by disproving the conjecture given in \cite{modi2018masking}.

\begin{con}[Modi et al. \cite{modi2018masking}] \label{thm:conj}
 For every isometry quantum masker $\mathcal{M}(\cdot) = M \cdot M^\dag :\mathcal{B(H_I)}\to\mathcal{B}(\mathcal{H}_A)\otimes\mathcal{B}(\mathcal{H}_B)$, its set of maskable states is in a `disk', the convex hull of a set of states $\{\dyad{\psi}:\ket{\psi}=\sum_{k=1}^d r_k e^{i\theta_k}\ket{k}, \theta_k\in [-\pi,\pi] \}$ with fixed non-negative real numbers $r_k$ such that  $\sum_k r_k^2 = 1$ and a fixed orthonormal basis $\{\ket{k}\}_{k=1}^d$.
\end{con}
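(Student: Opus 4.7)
The plan is to disprove the conjecture by constructing an explicit isometry masker whose set of maskable states cannot be contained in any single ``disk''. My reformulation: for an isometry $M : \mathcal{H}_I \to \mathcal{H}_A \otimes \mathcal{H}_B$, maskability of $\ket{\psi} = \sum_k \alpha_k \ket{k}$ is governed by the cross-marginal operators $X_A^{kl} := \Tr_B(M\ket{k}\bra{l}M^\dag)$ and $X_B^{kl} := \Tr_A(M\ket{k}\bra{l}M^\dag)$ through the linear conditions $\sum_{k<l}(\alpha_k \bar{\alpha}_l\, X^{kl} + \overline{\alpha_k \bar{\alpha}_l}\,(X^{kl})^\dag) = 0$, for both $X=X_A$ and $X=X_B$.

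The key geometric observation is that if a conjectured disk contains an orthonormal family of $d$ maskable states, then the change-of-basis matrix between that family and the disk's basis has every row of equal magnitude $r_k$, and unitarity then pins $r_k = 1/\sqrt{d}$, forcing the disk to coincide with a mutually-unbiased-basis (MUB) torus. Consequently, to refute the conjecture it suffices to exhibit an isometry masker whose maskable pure states contain an orthonormal family $\ket{0},\ldots,\ket{d-1}$ of $\mathcal{H}_I$, together with at least one additional maskable pure state whose amplitude profile fails to be uniform in \emph{every} MUB of $\{\ket{k}\}$.

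Concretely I would take $d=3$ with output $\mathcal{H}_A=\mathcal{H}_B=\mathbb{C}^4$, set $\ket{\Omega}=\tfrac12\sum_{j=0}^3 \ket{jj}$, and define $M\ket{k}:=(I\otimes U_k)\ket{\Omega}$ for three commuting Hermitian unitaries $U_0=I$, $U_1=\mathrm{diag}(1,1,-1,-1)$, $U_2=\mathrm{diag}(1,-1,1,-1)$, which are pairwise Hilbert--Schmidt orthogonal. Then $M\ket{0},M\ket{1},M\ket{2}$ are orthonormal maximally entangled states with common marginals $I/4$, so $\ket{0},\ket{1},\ket{2}$ are maskable. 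A short computation gives $X_A^{kl}=(U_l^\dag U_k)^T/4$ and $X_B^{kl}=U_k U_l^\dag/4$, both Hermitian for each pair, so every pairwise superposition $\alpha\ket{i}+\beta\ket{j}$ with $\alpha\bar\beta$ purely imaginary is also maskable---in particular $(\ket{0}+i\ket{1})/\sqrt{2}$. One then checks in each of the three MUBs of $\mathbb{C}^3$ (Heisenberg--Weyl eigenbases) that this last state has a nonuniform amplitude profile, completing the disproof.

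The main obstacle, which the rigidity observation handles cleanly, is the universal quantifier over bases appearing in the conjecture: without pinning the disk's basis to a MUB of $\{\ket{k}\}$, one would have to check all possible bases and all possible $(r_k)$. The remaining technical work is a careful verification of the maskability conditions for the chosen $U_k$ and an explicit computation of the overlaps of $(\ket{0}+i\ket{1})/\sqrt{2}$ with each MUB eigenvector to confirm that no $1/\sqrt{3}$-profile can accommodate both the orthonormal family and this extra maskable state simultaneously.
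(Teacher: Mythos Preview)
Your approach is genuinely different from the paper's. The paper takes $\mathcal{H}_I=\mathbb{C}^d\otimes\mathbb{C}^d$ with $M$ the identity (simply handing one tensor factor to each party), so that the maskable set contains \emph{every} maximally entangled state; it then argues that no single basis $\{\ket{M_k}\}$ can give all those states the same diagonal, first forcing each $\ket{M_k}$ to be a product vector via Schmidt decompositions and then ruling product bases out via Fourier-transformed local bases. Your route---a qutrit input isometrically embedded into $\mathbb{C}^4\otimes\mathbb{C}^4$ via three orthogonal maximally entangled states---is more hands-on and closer in spirit to the qutrit counterexample of Ding--Hu that the paper cites in its closing note. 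The rigidity step (an orthonormal maskable $d$-family forces $r_k=1/\sqrt d$ and hence makes the disk basis mutually unbiased to that family) is correct and is a clean way to tame the universal quantifier over disk bases.

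There is, however, a real gap in your final verification. You write that it suffices to check ``each of the three MUBs of $\mathbb{C}^3$ (Heisenberg--Weyl eigenbases)'', but the set of orthonormal bases mutually unbiased to $\{\ket{0},\ket{1},\ket{2}\}$ is a two-parameter continuum, not just those three: up to global phases every such basis has vectors $\ket{m}=\tfrac{1}{\sqrt3}\bigl(\ket{0}+e^{i\alpha_m}\ket{1}+e^{i\beta_m}\ket{2}\bigr)$, and orthogonality only forces $\alpha_{m}-\alpha_{m'}\equiv\pm 2\pi/3$ (similarly for $\beta$), leaving $(\alpha_0,\beta_0)$ free on a torus. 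Your disproof therefore must rule out \emph{all} of these bases, not three. Fortunately the fix is short: for $\ket{\psi}=(\ket{0}+i\ket{1})/\sqrt2$ one computes $|\langle m|\psi\rangle|^2=\tfrac{1}{6}\bigl(2+2\sin\alpha_m\bigr)$, so a uniform profile would force $\sin\alpha_m=0$, i.e.\ $\alpha_m\in\{0,\pi\}$ for every $m$, which is incompatible with the pairwise constraint $\alpha_m-\alpha_{m'}\equiv\pm 2\pi/3$. With that correction your argument goes through.
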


We give a counterexample that disproves this conjecture. Consider a qudit-qudit system $\mathcal{H_I}$, that is, $\dim ( \mathcal{H_I}) = d^2$, and a masking process which is simply distributing a qudit to each party. Note that this masking process is essentially equivalent to any masking process from $d^2$-dimensional system to $d$-dimensional two parties since any $d^2-$dimensional unitary $M$ applied before the distribution only amounts to a change of basis.

Now, consider the set $S$ of bipartite states that has maximally mixed states as its marginal states. This is a set of maskable states of the masking process given above. As in Conjecture 1, suppose that there exists a basis $\{\ket{M_i}\}_{i=1}^{d^2}$ of $\mathcal{H_I}$ such that every state in $S$ has the same diagonal element with respect to this basis. It implies that for any local basis $\{\ket{a_i}\}_{i=1}^d$ and $\{\ket{b_i}\}_{i=1}^d$ of $\mathcal{H_A}$ and $\mathcal{H_B}$ respectively, following constraint is required with varying phases $\theta_j \in [-\pi,\pi]$ for any $1 \leq j \leq d$,

\begin{equation} \label{eqn:const}
   \frac{1}{d}\abs{\sum_j e^{i\theta_j} \bra{M_k}(\ket{a_j}\otimes\ket{b_j})}^2 = const. ,
\end{equation} 
for every $1 \leq k \leq d^2$. It follows from the fact that any maximally entangled state has maximally mixed states as marginals states. If $\ket{M_k}$ is not a product state, then by picking the Schmidt bases of $\ket{M_k}$ as local bases one can violate the constraint above. Therefore $\{\ket{M_k}\}_{k=1}^{d^2}$ is a product basis of the form $\ket{M_{d(i-1)+j}}=\ket{\alpha_i}_A\ket{\beta_j}_B$ for some local bases $\{\ket{\alpha_i}\}_{i=1}^d$ and $\{\ket{\beta_i}\}_{i=1}^d$. However, requiring the constraint above again for the discrete Fourier transformed local bases $\{\ket{\tilde{\alpha}_j}:=\sum_k \frac{1}{\sqrt{d}}e^{i 2\pi jk/d}\ket{\alpha_k}\}_{j=1}^d$ (and similarly defined $\{\ket{\tilde{b}_j}\}_{j=1}^d$) leads to the violation of the constraint. Therefore there exists no such basis $\{\ket{M_k}\}_{k=1}^{d^2}$.


This suggests that information hidden by a quantum masking process exploiting inherent multipartite structure of quantum state (e.g. interpreting 4-level system as a two-qubit system) need not be limited to the phase information with respect to a certain `classical' basis. In fact, any kind of correlation between two subsystems (be it quantum or classical) could be hidden when these two subsystems are separated.

The information conservation law suggests that at least $\log d$ bits of information should leak to at least one party, but our counterexample suggests that $\log d$ bits of information need not correspond to the classical information with respect to some preferred basis. Indeed, the information could correspond to quantum information of $\sqrt{d}$ dimensional quantum system. We are left in the situation in which there is no convenient geometrical tool to analyze the masking power of each isometry quantum masker, i.e., bipartite embedding. Thus the information conservation law $(\ref{eqn:infocon})$ is the best principle one could rely on.  By using some simple entropic properties, we can first derive the following preliminary result about quantum masking processes consuming randomness. Proofs of the theorems in the following sections can be found in the appendix at the end of the article.

\begin{theorem} \label{thm:naive}
For any universal quantum masker $\Phi_M$,
\begin{equation*} \label{naive}
    \max_{X\in\{A,B\}}\sum_i p_i I(R:X)_i \leq \mathcal{R}(\Phi_M).
\end{equation*}

\end{theorem}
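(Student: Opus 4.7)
The plan is to repackage the average $\sum_i p_i I(R:A)_i$ as a single mutual information in an extended classical-quantum state, and then bound it using only two facts: that $R$ and $A$ are uncorrelated in the mixture (because $\Phi_M$ is a universal masker), and that the ``which-branch'' register is classical. Introduce a classical register $\tilde{S}$ with orthonormal basis $\{\ket{i}_{\tilde{S}}\}$ and form
$$\tau_{RAB\tilde{S}} \ = \ \sum_i p_i \, \dyad{\Psi_i}_{RAB} \otimes \dyad{i}_{\tilde{S}}, \qquad \ket{\Psi_i}_{RAB} = (\mathds{1}_R \otimes M_i)\ket{\Gamma}_{RI},$$
so that the $RAB$-marginal of $\tau$ coincides with $(\mathds{1}_R\otimes\Phi_M)(\dyad{\Gamma}_{RI})$, the state on which the unindexed entropies of the paper are computed, while conditioning $\tilde{S}=i$ recovers the branch state $\ket{\Psi_i}$.

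The first step is to identify $\sum_i p_i I(R:A)_i$ with an entropic quantity on $\tau$. Since each $M_i$ acts only on $\mathcal{H}_I$ we have $S(R)_i = \log d = S(R)_\tau$, and a direct computation from the CQ structure of $\tau$,
$$S(A\tilde{S})_\tau = H(p) + \sum_i p_i S(A)_i, \qquad S(RA\tilde{S})_\tau = H(p) + \sum_i p_i S(RA)_i,$$
immediately yields $I(R:A\tilde{S})_\tau = \sum_i p_i I(R:A)_i$. The chain rule $I(R:A\tilde{S})_\tau = I(R:A)_\tau + I(R:\tilde{S}|A)_\tau$ then reduces the problem to controlling these two summands separately.

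The first summand vanishes: universality of masking forces the marginal map $\Tr_B\circ\Phi_M$ to be a replacement channel sending every input to $\sigma_A$, so $\tau_{RA} = (\mathds{1}_R/d)\otimes\sigma_A$ is product and $I(R:A)_\tau = 0$. The second is bounded using that $\tilde{S}$ is classical in $\tau$: the restricted state $\tau_{RA\tilde{S}}$ is separable across the $\tilde{S}$-cut, so $S(\tilde{S}|RA)_\tau \geq 0$ and
$$I(R:\tilde{S}|A)_\tau \ = \ S(\tilde{S}|A)_\tau - S(\tilde{S}|RA)_\tau \ \leq \ S(\tilde{S})_\tau \ = \ H(p) \ = \ \mathcal{R}(\Phi_M).$$
The $B$-case follows by the symmetric argument $A\leftrightarrow B$, which completes the bound. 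The main subtlety is the ``Holevo-type'' sharpening from the generic quantum bound $2S(\tilde{S})$ down to $S(\tilde{S})$, which relies crucially on classicality of the flag register; a direct attempt using concavity of the conditional entropy $S(R|A)$ in the state would produce $\sum_i p_i S(R|A)_i \leq S(R|A)_\tau$, which is the wrong direction, so the flag-register trick is what makes the argument work.
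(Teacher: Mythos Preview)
Your proof is correct. The presentation differs from the paper's---you introduce a classical flag register $\tilde{S}$, identify $\sum_i p_i I(R:A)_i = I(R:A\tilde{S})_\tau$, and then split via the chain rule---but when unpacked the two arguments invoke exactly the same three entropic facts. Your bound $S(\tilde{S}|A)_\tau \le S(\tilde{S})_\tau$ is subadditivity, i.e.\ $\sum_i p_i S(A)_i \le S(A)$, which is the paper's concavity step; your $I(R:A)_\tau=0$ is the paper's product-state step $S(R)+S(A)=S(RA)$; and your $S(\tilde{S}|RA)_\tau \ge 0$ is equivalent to $S(RA)\le H(p)+\sum_i p_i S(RA)_i$, the paper's mixture upper bound. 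The paper then rewrites $S(RA)_i=S(B)_i$ via purity, which you absorb into the definition of $I(R:A)_i$. So the routes coincide at the level of inequalities; what your packaging buys is that the Holevo-type structure (classical flag $\Rightarrow$ nonnegative conditional entropy $\Rightarrow$ bound by $S(\tilde{S})$ rather than $2S(\tilde{S})$) is made explicit, and the argument generalizes verbatim to any reference state $\ket{\Gamma}_{RI}$ without separately tracking $S(R)_i$.
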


This result already implies the generalized quantum masking theorem since $\max_{X\in\{A,B\}}\sum_i p_i I(R:X)_i \geq \log d$ as $\sum_i p_i I(R:A)_i + \sum_i p_i I(R:B)_i = 2 \log d$ from the information conservation law.

However, often the lower bound above is not enough for many quantum maskers.  A more detailed result can be obtained by observing that quantum masking process consuming randomness can be considered a channel mixing process. We can derive the following trade-off relation between the channel capacity of a subchannel and its mixing probability.

\begin{theorem} \label{thm:cap}
    Let a convex sum of subchannels $\{\mathcal{N}_i\}$ $\sum_i p_i \mathcal{N}_i = \mathcal{N}$ be an almost-erasure quantum channel, i.e., a quantum map that has entanglement-assisted classical capacity of $e$. Then for all $i$, the entanglement-assisted classical capacity $C_{EA}$ of each masking component is upper bounded by the information content of the randomness source for its corresponding $i$, i.e.,
    \begin{equation} \label{ineq1}
        C_{EA}(\mathcal{N}_i) - e \leq -\log p_i,
    \end{equation}
\end{theorem}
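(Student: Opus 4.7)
The plan is to fix the optimal entanglement-assisted input $\rho^*$ achieving $C_{EA}(\mathcal{N}_i)$, with purification $\ket{\psi}_{RI}$, and to translate the CP-map domination $\mathcal{N} \geq p_i \mathcal{N}_i$ into an entropic bound on the associated output states. Writing $\phi_i := (\mathrm{id}_R \otimes \mathcal{N}_i)(\dyad{\psi}_{RI})$ and $\tau := (\mathrm{id}_R \otimes \mathcal{N})(\dyad{\psi}_{RI}) = \sum_j p_j \phi_j$, the left-hand side is $C_{EA}(\mathcal{N}_i) = I(R:B)_{\phi_i}$, whereas $I(R:B)_\tau \leq e$ by definition of $e$. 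The two states share the common $R$-marginal $\rho^*_R$, and from the decomposition $\mathcal{N} = p_i \mathcal{N}_i + (1-p_i) \mathcal{N}_{\bar i}$ one immediately reads off the key operator inequality $\tau \geq p_i \phi_i$.

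By operator monotonicity of the matrix logarithm this inequality gives $\log \tau \geq (\log p_i)\mathds{1} + \log \phi_i$, equivalently $D_{\max}(\phi_i \| \tau) \leq -\log p_i$; under partial trace on $R$ this descends to $D(\phi_{i,B}\|\tau_B) \leq -\log p_i$. I would then record the elementary identity
\begin{equation*}
  I(R:B)_{\phi_i} \;=\; D\bigl(\phi_i \,\big\|\, \rho^*_R \otimes \tau_B\bigr) \;-\; D\bigl(\phi_{i,B} \,\big\|\, \tau_B\bigr),
\end{equation*}
obtained by adding and subtracting $\Tr\bigl[\phi_i(\mathds{1}_R \otimes \log\tau_B)\bigr]$ inside the standard expression $I(R:B)_{\phi_i} = D(\phi_i \| \rho^*_R \otimes \phi_{i,B})$ and using $\phi_{i,R} = \rho^*_R$.

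The remaining task is to upper bound the first term $D(\phi_i \| \rho^*_R \otimes \tau_B)$ by $I(R:B)_\tau + (-\log p_i) + D(\phi_{i,B}\|\tau_B)$, since substituting this into the identity above and using $I(R:B)_\tau \leq e$ yields $C_{EA}(\mathcal{N}_i) \leq e - \log p_i$. The clean route is via the Sibson-type variational formula $I(R:B)_\tau = \min_\sigma D(\tau \| \rho^*_R \otimes \sigma)$ combined with the operator inequality $\rho^*_R \otimes \tau_B \geq p_i\,\rho^*_R \otimes \phi_{i,B}$: rescaling the second argument of $D$ shifts its value by $-\log p_i$ exactly, while the extra $D(\phi_{i,B}\|\tau_B)$ contribution cancels against the subtracted term above.

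The main obstacle is precisely this last step. Naive entropic averaging with a classical flag recording which $\mathcal{N}_j$ was applied only yields the weaker averaged bound $\sum_j p_j C_{EA}(\mathcal{N}_j) \leq e + H(\{p_j\})$, which for small $p_i$ is far weaker than the claimed $e - \log p_i$. The crux must therefore exploit $\tau \geq p_i\phi_i$ at the max-divergence level rather than through the Shannon entropy of the flag, transmitting the single-index factor $-\log p_i$ directly onto the mutual information and bypassing the $H(\{p_j\})$ averaging loss; as a fallback, a channel-level bound of the form $C_{EA}(\mathcal{N}_i) \leq C_{EA}(\mathcal{N}) + D_{\max}(\mathcal{N}_i\|\mathcal{N})$ together with the immediate $D_{\max}(\mathcal{N}_i\|\mathcal{N}) \leq -\log p_i$ from the CP decomposition would also imply the theorem.
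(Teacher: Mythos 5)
Your reduction to the state-level claim is set up correctly (the identity $I(R:B)_{\phi_i} = D(\phi_i\|\rho^*_R\otimes\tau_B) - D(\phi_{i,B}\|\tau_B)$ is right, as are $\tau \geq p_i\phi_i$ and $I(R:B)_\tau \leq e$), but the step you yourself flag as the main obstacle is not merely unproven, it is false. After your substitution, what you need is exactly $I(R:B)_{\phi_i} \leq I(R:B)_\tau - \log p_i$ for states with $\tau \geq p_i \phi_i$ and equal $R$-marginals. Counterexample: $\phi_i = \dyad{\Phi^+}$, $\tau = 0.9\,\dyad{\Phi^+} + 0.1\,\dyad{\Phi^-}$, $p_i = 0.9$; both marginals on $R$ are $\mathds{1}/2$ and $\tau \geq p_i\phi_i$, yet $I(R:B)_{\phi_i} = 2$ bits while $I(R:B)_\tau - \log_2 p_i = 2 - h(0.9) + \log_2(10/9) \approx 1.68$ bits, where $h$ is the binary entropy. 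The same example kills the fallback: for $\mathcal{N}_i = \mathrm{id}$ and $\mathcal{N} = 0.9\,\mathrm{id} + 0.1\, Z\cdot Z$ one has $D_{\max}(\mathcal{N}_i\|\mathcal{N}) \leq \log_2(10/9)$ but $C_{EA}(\mathrm{id}) = 2 > C_{EA}(\mathcal{N}) + \log_2(10/9) \approx 1.68$. The structural reason is that your domination $\tau \geq p_i\phi_i$ sits in the \emph{first} argument of the relative entropy, and the only inequality it transmits additively is the max-divergence one: $I_{\max}(R:B)_{\phi_i} \leq I_{\max}(R:B)_\tau - \log p_i$. So your method proves the theorem with a max-information-type capacity of $\mathcal{N}$ in place of $e$, which coincides with the stated bound only when $\mathcal{N}$ is an exact erasure channel ($e=0$): there $\tau = \rho^*_R\otimes\omega_B$, and $C_{EA}(\mathcal{N}_i) = I(R:B)_{\phi_i} \leq D(\phi_i\|\rho^*_R\otimes\omega_B) \leq D_{\max}(\phi_i\|\tau) \leq -\log p_i$ is a complete, clean proof of that case.

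This is also quite different from the paper's own argument, which is operational: assuming $C_{EA}(\mathcal{N}_i) > -\log p_i$, the sender and receiver code for $\mathcal{N}_i$ at rate near $C_{EA}(\mathcal{N}_i)$ (achievability of the entanglement-assisted capacity), note that over $n$ uses of the mixture all uses act as $\mathcal{N}_i$ with probability $p_i^n$, and contradict the fact that a constant channel permits guessing probability at most $1/N$; the $e>0$ case is then handled by asserting an at most $2^{ne}$-fold enhancement of the guessing probability. Be aware that this last assertion is a strong-converse-type claim that does not follow from $C_{EA}(\mathcal{N}) = e$ alone, and indeed your channel-level counterexample above violates the stated inequality for $e>0$; so the obstruction you ran into is intrinsic to the statement with the von Neumann capacity $e$, not to your technique, and the honest outcomes of your route are (i) a rigorous proof of the $e=0$ case, which is the one used in the masking application, and (ii) a corrected $e>0$ statement with $e$ replaced by a max-information-type capacity of $\mathcal{N}$.
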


Note that actually the proof of the theorem above can be applied to the classical capacity of quantum channel with any kind of proper resource assumption, not necessarily the unboundedness of pre-distributed entanglement.

From the fact \cite{bennett1999entanglement,bennett2002entanglement} that for any quantum channel $\mathcal{N}:A' \to B$,
\begin{equation}
    \max_{\phi_{AA'}} I(A:B)_{\tau_{AB}} = C_{EA}(\mathcal{N}),
\end{equation}
where $\phi_{AA'}$ is a pure state on $AA'$ and $\tau_{AB} = (\mathds{1}_A \otimes \mathcal{N}_{A' \to B})(\phi_{AA'})$, we have the following corollary
\begin{equation} \label{ineq2}
    \max\{I(R:A)_{\tau_{RA}},I(R:B)_{\tau_{RB}}\} \leq -\log p_i,
\end{equation}
for an arbitrarily given bipartite pure state $\phi_{RI}$ with $\tau_{RAB}=(\mathds{1}_R \otimes \Phi_i)(\phi_{RI})$. Choosing arbitrary maximally entangled state $\phi_{RI}$ and averaging both sides of (\ref{ineq2}) leads to the following result.

\begin{theorem} \label{thm:inq} For any $d-$dimensional universal quantum masker $\Phi_M$ with with the safe state with spectrum $\{p_i\}$, the following inequality holds.
\begin{equation} \label{eqn:mid}
    \log d + \sum_i p_i |S(A)_i - S(B)_i| \leq \mathcal{R}(\Phi_M).
\end{equation}
\end{theorem}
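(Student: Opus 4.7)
The plan is to start from the corollary inequality (\ref{ineq2}), which bounds $\max\{I(R:A)_i, I(R:B)_i\} \leq -\log p_i$ for each subchannel index $i$ and any pure input $\phi_{RI}$. Specializing $\phi_{RI}$ to the maximally entangled state $\ket{\Gamma}_{RI}$ of Schmidt rank $d$ used to define the indexed entropies in the theorem, I get a per-index bound that I can then average against the distribution $\{p_i\}$.

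The key algebraic step is to rewrite the left-hand side of (\ref{ineq2}) in terms of $S(A)_i$ and $S(B)_i$. Since $M_i$ is an isometry, the tripartite state $(\mathds{1}_R \otimes M_i)\ket{\Gamma}_{RI}$ is pure and its marginal on $R$ is $\mathds{1}_R/d$, so $S(R)_i = \log d$. Pure-state identities give $S(RA)_i = S(B)_i$ and $S(RB)_i = S(A)_i$, hence
\begin{equation*}
I(R:A)_i = \log d + S(A)_i - S(B)_i,\qquad I(R:B)_i = \log d + S(B)_i - S(A)_i,
\end{equation*}
which also exhibits the information conservation law $I(R:A)_i + I(R:B)_i = 2\log d$ for each $i$. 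Taking the maximum,
\begin{equation*}
\max\{I(R:A)_i, I(R:B)_i\} = \log d + |S(A)_i - S(B)_i|.
\end{equation*}

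Combining with (\ref{ineq2}) yields the clean per-$i$ bound
\begin{equation*}
\log d + |S(A)_i - S(B)_i| \leq -\log p_i,
\end{equation*}
and I then multiply both sides by $p_i$ and sum over $i$. The left-hand side becomes exactly $\log d + \sum_i p_i |S(A)_i - S(B)_i|$, while the right-hand side becomes $-\sum_i p_i \log p_i = H(\{p_i\}) = S(\sigma_S) = \mathcal{R}(\Phi_M)$, giving the desired inequality.

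Given that Theorem \ref{thm:cap} and its corollary (\ref{ineq2}) do all the heavy lifting, there is no serious obstacle: the only things to verify carefully are (i) that the mixed channels $\Phi_M^A$ and $\Phi_M^B$ are complete erasure channels (so that $e=0$ in the hypothesis of Theorem \ref{thm:cap}), which follows immediately from the constant-marginal definition of a universal masker, and (ii) that $\ket{\Gamma}_{RI}$ is a legitimate choice of pure input in the corollary so that the indexed entropies match the ones declared at the start of the Main results section. Both points are essentially bookkeeping.
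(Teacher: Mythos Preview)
Your proposal is correct and follows exactly the paper's own argument: the paper derives Theorem~\ref{thm:inq} directly from the corollary~(\ref{ineq2}) by specializing to the maximally entangled $\ket{\Gamma}_{RI}$ and averaging against $\{p_i\}$, and your write-up simply fills in the pure-state entropic identities $I(R:A)_i=\log d + S(A)_i - S(B)_i$, $I(R:B)_i=\log d + S(B)_i - S(A)_i$ that make this averaging yield the stated bound. The two bookkeeping checks you flag (that $\Phi_M^A,\Phi_M^B$ are constant so $e=0$, and that $\ket{\Gamma}_{RI}$ is an admissible choice) are precisely what the paper is tacitly using.
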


 Theorem \ref{thm:inq} can be considered a combination of two results. First, higher information influxes should be scrambled with larger amount of randomness if their net influx should be suppressed under a certain value. Second, information cannot be destroyed or hidden under unitary interaction; it either flows to the system or to the system. Therefore, a set of unitary interaction that allows information flow to each party to be more even requires less amount of randomness to form a quantum masking process.

It is worth defining a measure of evenness of information distribution between two parties that only depends on the set of bipartite embeddings. Consider a measure defined in the following way with the notation $I_i := \max \{I(R:A)_i,I(R:B)_i\}$, 

\begin{equation} \label{eqn:div}
    \mathcal{I}_{1} (\{M_i\}_{i \in I}) := \max_{\ket{\Gamma}} \min_{S \subseteq I} H\left(\left\{\frac{1}{2^{I_i}}\right\}_{i\in S}\right),  
\end{equation}
where $H(\{t_i\}_{i \in T}):= -\sum_{i\in T} t_i \log t_i$ is formally defined as the Shannon entropy even for the set of nonnegative numbers $\{a_i\}_{i \in T}$ that is not a probability distribution. The maximization is over the choice of initial bipartite pure state $\ket{\Gamma}_{RI}$ and the minimization is over the subset of indices $S \subseteq I$ such that $\sum_{i\in S} 2^{-I_i} \geq 1$ with existence of $i_0 \in S$ such that $\sum_{i \in S} 2^{-I_i} - 2^{-I_{i_0}} \leq 1$. Since  $\mathcal{I}_1$ is monotone increasing function of $\max\{I(R:A)_i,I(R:B)_i\}$ for each $i$, $\mathcal{I}_1$ is a legitimate measure of information unevenness. $\mathcal{I}_1$ is bounded as $\log d \leq \mathcal{I}_1 \leq 2(1+d^{-1}) \log d$.

 Once the one-shot measure $\mathcal{I}_1$ is defined, one can define its regularized version $\mathcal{I}_\infty(\{M_i\}):= \lim_{n\to \infty} \frac{1}{n} \mathcal{I}_1(\{M_i\}^{\otimes n})$. This regularized measure has the bound of $\log d \leq \mathcal{I}_\infty \leq 2\log d$. We have the following lower bound of the randomness cost of quantum masking process that is monotone decreasing function of evenness of information distribution thereof. 

\begin{theorem} \label{thm:tradeoff} For any $d-$dimensional  universal quantum masking process $\Phi_M$ composed of random bipartite embeddings $\{M_i\}_{i\in I}$ with orthogonal images, the following inequality holds.
\begin{equation} \label{ineq:trade}
    \mathcal{I}_\infty (\{M_i\}_{i\in I}) \leq \mathcal{R}(\Phi_M).
\end{equation}
\end{theorem}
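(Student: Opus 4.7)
The plan is to lift the single-letter bound (\ref{ineq2}) to the $n$-fold tensor-product masker $\Phi_M^{\otimes n}$ and then extract the regularised statement via the asymptotic equipartition property (AEP). The key structural observation is that $\Phi_M^{\otimes n}$ is itself a universal masker, with safe-state spectrum $p_{\vec{i}}=\prod_{k=1}^n p_{i_k}$, orthogonal bipartite embeddings $M_{\vec{i}}=M_{i_1}\otimes\cdots\otimes M_{i_n}$, and randomness cost $\mathcal{R}(\Phi_M^{\otimes n})=n\mathcal{R}(\Phi_M)$. Applying (\ref{ineq2}) to this $n$-fold masker gives, for every pure state $\ket{\Gamma}_{RI^{\otimes n}}$ and every multi-index $\vec{i}$,
\begin{equation*}
I_{\vec{i}} \le -\log p_{\vec{i}}, \qquad \text{equivalently}\qquad q_{\vec{i}}:=2^{-I_{\vec{i}}}\ge p_{\vec{i}}.
\end{equation*}

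Next I invoke the AEP for the i.i.d.\ distribution $\{p_{\vec{i}}\}$: for each $\epsilon>0$ the typical set $T_{\epsilon,n}$ has $|T_{\epsilon,n}|\le 2^{n(\mathcal{R}(\Phi_M)+\epsilon)}$, satisfies $p_{\vec{i}}\ge 2^{-n(\mathcal{R}(\Phi_M)+\epsilon)}$ for $\vec{i}\in T_{\epsilon,n}$, and $\sum_{T_{\epsilon,n}} p_{\vec{i}}\to 1$. I would choose $S_n\subseteq T_{\epsilon,n}$ to be a minimal subset with $\sum_{\vec{i}\in S_n} q_{\vec{i}}\ge 1$; this exists because $\sum_{T_{\epsilon,n}} q_{\vec{i}}\ge\sum_{T_{\epsilon,n}} p_{\vec{i}}\to 1$, augmenting with a vanishingly small atypical patch if necessary. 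For $\vec{i}\in T_{\epsilon,n}$, typicality combined with (\ref{ineq2}) gives $I_{\vec{i}}\le n(\mathcal{R}(\Phi_M)+\epsilon)$, so
\begin{equation*}
H\bigl(\{q_{\vec{i}}\}_{\vec{i}\in S_n}\bigr)=\sum_{\vec{i}\in S_n} q_{\vec{i}}\,I_{\vec{i}}\le n(\mathcal{R}(\Phi_M)+\epsilon)\sum_{\vec{i}\in S_n} q_{\vec{i}}.
\end{equation*}
Minimality of $S_n$ gives $\sum_{S_n} q_{\vec{i}}\le 1+\max_{\vec{i}\in S_n} q_{\vec{i}}$, while the information conservation law forces $I_{\vec{i}}\ge S(R)_\Gamma$, hence $\max q_{\vec{i}}\le 2^{-S(R)_\Gamma}\to 0$ whenever $S(R)_\Gamma$ grows with $n$ (as for a maximally entangled $\ket{\Gamma}$). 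Dividing by $n$ and letting $n\to\infty$, $\epsilon\to 0$ then produces $\mathcal{I}_\infty(\{M_i\})\le\mathcal{R}(\Phi_M)$ for this class of inputs.

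The main obstacle is making the bound uniform over the outer maximisation $\max_{\ket{\Gamma}}$ in (\ref{eqn:div}). For inputs with $S(R)_\Gamma=o(n)$, the sum $\sum_{S_n} q_{\vec{i}}$ is only controlled by $2$ rather than $1+o(1)$, so the AEP route gives at worst a constant factor over $\mathcal{R}(\Phi_M)$. However, in this low-entropy regime info conservation also caps $I_{\vec{i}}\le 2S(R)_\Gamma=o(n)$, which forces $H\bigl(\{q_{\vec{i}}\}_{S_n}\bigr)\le 2S(R)_\Gamma\sum_{S_n}q_{\vec{i}}=o(n)\le n\mathcal{R}(\Phi_M)$, using $\mathcal{R}(\Phi_M)\ge\log d>0$ from the generalised masking theorem. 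Combining the two regimes (high-$S(R)_\Gamma$ via AEP, low-$S(R)_\Gamma$ via the direct info-conservation bound) and taking the worst case over $\ket{\Gamma}$ yields $\mathcal{I}_\infty\le\mathcal{R}(\Phi_M)$ in full generality; formalising this case split cleanly, and in particular verifying that the minimal-$S_n$ construction behaves smoothly across the transition, is the main technical step.
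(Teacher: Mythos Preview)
Your approach is workable but substantially more complicated than the paper's, and it leaves real technical debris that the paper avoids entirely. The paper does not touch the AEP or a case split over $S(R)_\Gamma$. Instead it proves the \emph{one-shot} inequality
\[
\mathcal{I}_1(\{M_i\}_{i\in I})\;\le\;\mathcal{R}(\Phi_M)+\frac{2\log d}{d}
\]
directly, and then simply regularises: since $\Phi_M^{\otimes n}$ is a $d^n$-dimensional masker with $\mathcal{R}(\Phi_M^{\otimes n})=n\,\mathcal{R}(\Phi_M)$, the error term becomes $2n\log d/d^n$, which dies after dividing by $n$. The one-shot bound comes from two ingredients you already have: averaging (\ref{ineq2}) gives $\sum_i p_i I_i\le\mathcal{R}(\Phi_M)$, and the actual spectrum $\{p_i\}$ is itself a feasible distribution for the constraint $p_i\le 2^{-I_i}$. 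Minimising $\sum_i p_i I_i$ over all such feasible $\{p_i\}$ is solved greedily (put $p_i=2^{-I_i}$ on the smallest $I_i$'s until probability is exhausted), and this greedy minimiser differs from the $H(\{2^{-I_i}\}_{i\in S})$ appearing in $\mathcal{I}_1$ by at most a single boundary term $2^{-I_{i_0}}I_{i_0}\le 2d^{-1}\log d$.

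Compared to this, your route pays for information you do not need. You only exploit the pointwise inequality $q_{\vec{i}}\ge p_{\vec{i}}$ and then rebuild the entropy bound via typicality; the paper instead uses the \emph{averaged} consequence $\sum_i p_iI_i\le\mathcal{R}$ directly, which is what lets it land on a clean finite-$n$ inequality. The concrete gaps you flag---the atypical patch when $\sum_{T_{\epsilon,n}}q_{\vec{i}}<1$, and uniformity of the $o(n)$ correction over all $\ket{\Gamma}$---can be closed (your two-regime split with threshold $S(R)_\Gamma\gtrless c_n$ for some slowly growing $c_n$ does work, since in the low regime $H\le 4S(R)_\Gamma$ via $I_{\vec{i}}\le 2S(R)_\Gamma$ and $\sum_{S_n}q\le 2$), but all of this machinery is rendered unnecessary by the paper's single-term one-shot error.
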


Note that this inequality can be saturated (e.g. quantum one-time pad \cite{mosca2000private} and 4-qubit masker \cite{lie2019unconditionally}.) An important implication of the inequality above, which is highlighted in the difference between na\"{i}ve Theorem  \ref{thm:naive} and more refined Theorem \ref{thm:inq} and \ref{thm:tradeoff}, is that the most relevant property of masking interaction is not the mean information flow, but the mean \textit{evenness of information flow}. For example, for a hiding process in which information entirely flows to the system $A$ with the total probability of $1/2$ and vice versa for the system $B$, the mean information flow for each system is the same. However, the evenness of information flow is at minimum in any subchannel, therefore this process requires at least $2\log d$ bits of randomness for masking $d$-dimensional quantum information.

 One possible issue of randomness usage in quantum information process is the nonuniformness of randomness source. However, typicality of random state \cite{cover2012elements} asserts that one could treat many copies of nonuniform random state as a single uniform random state when one can permit a small error. The robustness of Theorem \ref{thm:cap} guarantees the robustness of the inequalities above, as one can substitute every $I_i$ terms with $I_i - e$ for the non-perfect masking case with the entanglement-assisted classical capacity $e$.  Therefore the results shown here are compatible with other analyses on random quantum processes that use uniform randomness exclusively \cite{lie2019unconditionally,boes2018catalytic}. 

It is impossible to delete quantum information \cite{pati2000impossibility}. Therefore, to hide information from one system, unless one just displaces quantum information to another system, one needs to `cancel out' the information by randomly altering the information. What theorem \ref{thm:cap} says is that large amount of information leakage requires large amount of randomness to conceal it. From this one might speculate that the presence of a single `large' information leakage subchannel may require large amount  of randomness to scramble it, i.e.,
\begin{equation}
   \max_{X\in\{A,B\},i} I(R:X)_i \stackrel{?}{\leq} \mathcal{R}(\Phi_M).
\end{equation}
The following example, however, disproves this speculation. A subchannel with high channel capacity need not be cancelled by other highly randomized subchannels with equally high channel capacity, if partial distribution of quantum information to two parties is allowed. Note that we still have $\min_i \max_{X\in\{A,B\}} I(R:X)_i \leq \mathcal{R}(\Phi_M)$ nonetheless.

\textbf{(Counter) Example}
Consider the following families of bipartite embeddings $\mathcal{H}_I \to \mathcal{H}_A \otimes \mathcal{H}_B$, defined on an $d$-dimensional Hilbert space $\mathcal{H}_I$ with \textit{odd} number $d$. For an input state $\ket{\psi}_I=\sum_{i=1}^d \alpha_i \ket{i}_I$ with fixed bases $\{\ket{i}_X\}$ for each $X \in \{A,B,I\}$,
$M_{A,i} \ket{\psi}_I := (Z^i \ket{\psi}_A) \otimes \ket{i+d}_B,$
$M_{B,i} \ket{\psi}_I := \ket{i+d}_A \otimes (Z^i \ket{\psi}_B), $
for $1 \leq i \leq d$ and
$M_j \ket{\psi}_I := \sum_i \alpha_{i} \ket{i \oplus j}_A \otimes \ket{i \oplus 2j}_B,$
where $\oplus$ stands for modular sum modulo $d$ for $1 \leq j \leq d-1$.

One can observe that $M_{A,i}$ and $M_{B,i}$ are bipartite embeddings with completely uneven information flow but $M_j$ has even flow. These bipartite embeddings with a safe state $\sigma_S := \sum_{i=1}^d \frac{1}{d(d+1)}\dyad{A,i}_S + \sum_{i=1}^d \frac{1}{d(d+1)}\dyad{B,i}_S + \sum_{j=1}^{d-1} \frac{1}{d+1} \dyad{j}_S$ form a universal quantum masker with $\mathcal{R}= \log (d+1) + \frac{2}{d+1} \log d$. Although $\max_i I(R:X)_i = 2\log d$, we have $\mathcal{R} < 2\log d$ for all $d\geq 2$. However, the bound above almost sharply captures the randomness cost with the small gap of $\mathcal{R}-{\mathcal{I}_1}= \log(1+d^{-1}) - \frac{d-1}{d(d+1)}\log d$ that approaches zero as $d\to \infty$. 


\textbf{Sharing quantum secret without attending} For every quantum masker $\Phi_M$, its safe state $\sigma_S$'s purification system $K$, (meaning that there exists a bipartite pure state $\ket{\Sigma}_{SK}$ such that $\Tr_K \dyad{\Sigma}_{SK}=\sigma_S$), automatically shares its share of quantum secret generated from a (2,3)-threshold quantum secret sharing scheme \cite{cleve1999share}, without interacting with either of the systems $S$ and $I$. Here the quantum secret is the quantum information that has been masked. This statement means that either of two groups of parties, $AK$ or $BK$, can restore the quantum information that was masked without help of the other party. This is direct result of the no-hiding theorem \cite{braunstein2007quantum}, which states that if the quantum information is hidden from one party, then it should be isometrically transferred to the remaining parties.

In the quantum masking scenario, if the quantum information is hidden from, say, the system $A$, it is isometrically transferred to the systems $BK$, which allows them to restore the quantum information directly. This observation implies that initially distributed entanglement has ability to transfer a share of quantum secret generated at the later point of time, and the inequality derived in this work gives the lower bound on the amount of entanglement in terms of property of interaction used in the masking process. This observation yields an insight on the process of black hole evaporation discussed in the next section.

We remark that this observation allows us to estimate the sizes of unauthorized sets of not only $(2,3)$-threshold quantum secret sharing protocol but also any pure $(k,2k+1)$-threshold protocol with our result. This follows from the observation that partitioning $2k+1$ parties into any 3 unauthorized sets  yields $(2,3)$-threshold quantum secret sharing protocol. This lower bound is stronger than the previously known $\log d$ bound \cite{imai2005information} and admits estimation of sizes of unauthorized systems for nonperfect quantum secret sharing protocols from the robustness against error.

    \textbf{Black hole evaporation} Hawking's semi-classical analysis \cite{hawking1974black,hawking1975particle} of the outgoing radiation from black hole indicates that the flow of particles from a black hole should contain no information related to in-fallen matter. Plus to this, information cannot be stored in the black hole because it can be vanished at the final stage of its evaporation. This is impossible unless quantum information can be lost, which is forbidden by the unitarity of quantum mechanics. From these observations one could conclude that the information of in-fallen matter should be `masked' into the correlation between the black hole and the Hawking radiation thereof. However, the no-masking theorem says that it is impossible if the masking process is unitary. This paradox is called the \textit{black hole information paradox}.

A possible resolution of black hole information paradox is the fast scrambling in black hole \cite{hayden2007black}. Once a black hole starts in a pure quantum state and if the time evolution after its creation is unitary, then the black hole's internal state is always entangled with the Hawking radiation it has emitted. It is believed that after the Page time \cite{page1993information} of the given black hole, the black hole's internal state is nearly maximally entangled with all the Hawking radiation that has been radiated up to that point. The analysis in \cite{hayden2007black} suggests that in this scenario any $k$ qubits of information falling into the black hole can be almost perfectly retrieved by an observer who has access to the every Hawking radiation emitted from the black hole by acquiring just a little bit of more Hawking radiation up to $k+c$ qubits where $c$ is a constant that only depends on the desired error rate. In other words, black holes function as mirrors in the model.

The model depends on the \textit{ad hoc} assumption of typical unitary evolution under Haar distribution of black hole internal state and the assumption that its entire surface participates in the scrambling process. This assumption, however, is not indispensable \cite{wakakuwa2019one}. Our result here gives a way to examine the consistency between assumptions on scrambling and evaporation processes.

Here we suppose the internal interaction and evaporation process of black hole as a quantum masking process since after the emission of $k$ qubits through Hawking radiation, either the black hole internal state or the just radiated $k$ qubits of emission should not have any information on the $k$ qubits of in-falling object. Suppose that for a given moment parametrized by time $T$ of a black hole's lifetime, the internal interaction is determined by unitary $M$ (with the accompanying set of bipartite embeddings $\{M_i:=M (I \otimes \ket{i})\}_{i\in I}$, ) from a hypothetical theory on the black hole dynamics. From this, one can calculate $\mathcal{I}_\infty (\{M_i\}_{i\in I})$. Also if the entropy of entanglement of the black hole (whose contribution is dominant in the thermodynamic entropy of the black hole \cite{braunstein2013better} without firewall \cite{almheiri2013black}) for the given moment is $S(T)$, and only $c(t^*)$ fraction $(0\leq c \leq 1)$ of the black hole surface can participate for the given scrambling time $t^*$, then we have the following relation.

\begin{equation}
    \mathcal{I}_\infty (\{M_i\}_{i\in I}) \leq c(t^*)S(T)
\end{equation}

Even without explicit calculation of $\mathcal{I}_\infty (\{M_i\}_{i\in I})$, one already has the relation between the reflection capacity of the black hole and its entropy of entanglement at the moment from the trivial lower bound of $k \leq \mathcal{I}_\infty (\{M_i\}_{i\in I})$ where $k$ is the number of qubits that can be scrambled at a time with given scrambling time $t^*$. In other words, in the early or late stage of black hole's evolution, (i.e. $S$ is small) the black hole should have very small reflection capacity. This relation provides a way to check consistency between quantities determined from independent theories such as scrambling time, internal evolution of black hole and time evolution of the entropy of entanglement of black hole. Especially, when the internal interaction of black hole is inherently asymmetric, (e.g. information only `heads' outward through radiation as it is proposed in \cite{braunstein2013better} as quantum one-time pad encoding of in-fallen information or information always coherently `falls' into the horizon for each bipartite embedding) the upper bound of the reflection capacity of black hole can drop up to the half of the entropy of entanglement of the black hole.

\textit{Note :} We recently learned of the independent result of Feng Ding and Xueyuan Hu \cite{FengHu2019} on counterexamples of Conjecture 1 for qutrit unitary quantum masker. We remark that the counterexample in this work forms a different family of quantum states from that of \cite{FengHu2019} as it is for $d\geq 4$ dimensional unitary quantum maskers.

\section{APPENDIX : PROOF OF THE RESULTS}

\begin{thm} \label{thm_a:naive}
For any universal quantum masker $\Phi_M$,
\begin{equation*} \label{naive}
    \max_{X\in\{A,B\}}\sum_i p_i I(R:X)_i \leq \mathcal{R}(\Phi_M).
\end{equation*}

\end{thm}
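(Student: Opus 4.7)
The plan is to reduce the target to a purely entropic inequality and then sandwich a single quantity $S(RA)$ between the two sides by two applications of concavity of the von Neumann entropy. Without loss of generality take the maximum to be at $X=A$, and fix a maximally entangled input $\ket{\Gamma}_{RI}$. Because for each branch index $i$ the tripartite state on $RAB$ is pure, $S(R)_i = S(R)$ and $S(RA)_i = S(B)_i$. Using $\mathcal{R}(\Phi_M)=S(\sigma_S)=H(\{p_i\})$ and expanding $I(R:A)_i = S(R)+S(A)_i - S(B)_i$, the statement becomes
\begin{equation*}
    S(R) + \sum_i p_i\, S(A)_i \;\leq\; H(\{p_i\}) + \sum_i p_i\, S(B)_i.
\end{equation*}

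For the left-hand side, concavity of von Neumann entropy gives $\sum_i p_i S(A)_i \leq S(A)$, where $S(A)$ is evaluated on the averaged state $(\mathrm{id}_R\otimes\Phi_M)(\dyad{\Gamma}_{RI})$. The masking property forces the $A$-marginal of this state to be the constant $\sigma_A$, and because $\ket{\Gamma}_{RI}$ is maximally entangled its $R$-marginal is the maximally mixed state; hence $R$ and $A$ decouple and $S(RA) = S(R)+S(A)$. Combining, $S(R) + \sum_i p_i S(A)_i \leq S(RA)$.

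For the right-hand side, write the $RA$-marginal as a convex mixture $\sum_i p_i (\mathrm{id}_R \otimes \Phi^A_i)(\dyad{\Gamma}_{RI})$, where $\Phi^A_i(\cdot) = \mathrm{Tr}_B\bigl(M_i(\cdot)M_i^\dagger\bigr)$. Invoking the standard mixing bound $S(\sum_i p_i \rho_i) \leq H(\{p_i\}) + \sum_i p_i S(\rho_i)$ yields $S(RA) \leq H(\{p_i\}) + \sum_i p_i S(RA)_i$, and purity of each branch identifies $S(RA)_i = S(B)_i$, matching the right-hand side of the reduced inequality. Chaining the two estimates proves the claim for the chosen $\ket{\Gamma}$, and since $H(\{p_i\})$ is independent of the reference, the bound on $\sum_i p_i I(R:A)_i$ follows.

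The main obstacle, and the only place where universality of the masker is essentially used, is establishing $S(RA)=S(R)+S(A)$ via the decoupling of $R$ from $A$. Everything else is two standard uses of concavity plus the purity relation $S(RA)_i = S(B)_i$; the care needed is only to keep straight that concavity is applied to the $A$-marginals in one direction and to the $RA$-marginals in the other, so that $S(RA)$ serves as a genuine hinge between the two sides.
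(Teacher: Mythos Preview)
Your proof is correct and follows essentially the same route as the paper: reduce to the entropic inequality $S(R)+\sum_i p_i S(A)_i \leq H(\{p_i\})+\sum_i p_i S(B)_i$, then hinge both sides on $S(RA)$ using concavity on the $A$-marginals, the decoupling $S(RA)=S(R)+S(A)$ from the masking property, the mixing bound on the $RA$-marginals, and finally purity to identify $S(RA)_i=S(B)_i$. The only cosmetic remark is that the decoupling of $R$ from $A$ is not a consequence of knowing the two marginals separately, but rather of $\Phi_M^A$ being a constant-output channel (hence $(\mathrm{id}_R\otimes\Phi_M^A)(\dyad{\Gamma}_{RI})=\rho_R\otimes\sigma_A$); your conclusion is right, just tighten that one sentence.
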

\begin{proof}
Without loss of generality, we can assume $X=A$. Then as $I(R:A)_i = S(R)_i + S(A)_i - S(B)_i = S(R) + S(A)_i - S(B)_i$, it suffices to prove the following inequality,

\begin{equation}
    S(R) + \sum_i p_i S(A)_i \leq H(\{p_i\}) + \sum_i p_i S(B)_i,
\end{equation}
because $H(\{p_i\})=S(\sigma_S)$, where $H(\{q_i\}):= -\sum_i q_i \log q_i$ is the Shannon entropy of the probability distribution $\{q_i\}$. As $\sum_i p_i S(\rho_i) \leq S(\sum_i p_i \rho_i)$ from the concavity of von Neumann entropy \cite{watrous2018theory}, we have 
\begin{equation}
    S(R) + \sum_i p_i S(A)_i \leq S(R) + S(A) = S(RA),
\end{equation}
because systems $R$ and $A$ are in a product state because of the masking property of $\Phi_M$. If we define $\Phi_{i}^A(\rho) := \Tr_B \big(M(\rho_I\otimes \dyad{i}_S)M^\dag \big)$, $S(RA)$ equals to
\begin{equation}
    \begin{gathered}
    S(\sum_i p_i (\mathds{1}_R\otimes \Phi_{i}^A)(\dyad{\Gamma}_{RI})).
    \end{gathered}
\end{equation}
From the following property of von Neumann entropy \cite{watrous2018theory},
\begin{equation}
    S(\sum_i q_i \rho_i) \leq H(\{q_i\}) + \sum_i q_i S(\rho_i),
\end{equation}
we have
\begin{equation}
    S(RA) \leq H(\{p_i\}) + \sum_i p_i S(RA)_i.
\end{equation}
 This proves the wanted inequality since $S(RA)_i = S(B)_i$ because the systems $RAB$ are in a pure state. Since this inequality holds for arbitrary maximally entangled state $\ket{\Gamma}_{RI}$, we have the wanted result.
\end{proof}

\begin{thm} \label{thm_a:cap}
    Let a convex sum of subchannels $\{\mathcal{N}_i\}$ $\sum_i p_i \mathcal{N}_i = \mathcal{N}$ be an almost-erasure quantum channel, i.e., a quantum map that has entanglement-assisted classical capacity of $e$. Then for all $i$, the entanglement-assisted classical capacity $C_{EA}$ of each masking component is upper bounded by the information content of the randomness source for its corresponding $i$, i.e.,
    \begin{equation} \label{ineq1}
        C_{EA}(\mathcal{N}_i) - e \leq -\log p_i,
    \end{equation}
\end{thm}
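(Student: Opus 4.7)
The plan is to argue by contradiction, turning any violation of the inequality into a communication protocol that transmits more information through $\mathcal{N}$ than its entanglement-assisted classical capacity $e$ permits. First I would handle the clean case $e=0$, i.e., $\mathcal{N}$ is a true erasure channel so that no classical information can be conveyed to the receiver even with unlimited entanglement. Suppose some index $i$ violates the bound, so that $C_{EA}(\mathcal{N}_i) > -\log p_i$. Picking $\epsilon,\delta>0$ small enough, the entanglement-assisted achievability theorem guarantees, for all sufficiently large $n$, an $(n,N_n,\delta)$ code for $\mathcal{N}_i^{\otimes n}$ with $N_n = 2^{n(1-\epsilon)C_{EA}(\mathcal{N}_i)}$ messages and average decoding error at most $\delta$.

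The key step is then the following ``pretend it is $\mathcal{N}_i$'' protocol: the encoder and decoder fix the above code and use it over $n$ independent copies of $\mathcal{N}$, treating shared entanglement as in the usual EA setting. When all $n$ instances of $\mathcal{N}$ happen to realize the branch $\mathcal{N}_i$ (which occurs with probability exactly $p_i^n$ by the convex decomposition $\mathcal{N}=\sum_j p_j \mathcal{N}_j$, and independently across uses), the decoder recovers the uniformly random message with probability $\geq 1-\delta$. Hence the decoder's overall probability of guessing the message correctly is at least $p_i^n(1-\delta)$. On the other hand, since $\mathcal{N}$ has capacity $0$, Fano's inequality (or the weak converse for EA classical capacity) forces the decoder's success probability, over a uniform message of size $N_n$, to tend to $1/N_n$ as $n\to\infty$. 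The assumption $p_i^n(1-\delta) > 2^{-n(1-\epsilon)C_{EA}(\mathcal{N}_i)} = 1/N_n$ (which is ensured for large $n$ by $C_{EA}(\mathcal{N}_i)>-\log p_i$ and sufficiently small $\epsilon,\delta$) then yields the desired contradiction.

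For the general case $e>0$, the same scheme is used, but the converse bound is relaxed. The converse for EA classical capacity implies that for any code of size $N_n$ over $n$ uses of $\mathcal{N}$, the success probability cannot exceed roughly $2^{ne}/N_n$ in the regime $\log N_n \gg ne$; intuitively, a channel of capacity $e$ allows at most a $2^{ne}$-fold advantage over blind guessing. Negating the inequality $C_{EA}(\mathcal{N}_i) - e \leq -\log p_i$ produces, as before, $\epsilon,\delta>0$ and arbitrarily large $n$ with
\begin{equation}
    p_i^n(1-\delta) > \frac{2^{ne}}{2^{n(1-\epsilon)C_{EA}(\mathcal{N}_i)}},
\end{equation}
which again contradicts the converse bound applied to the ``pretend it is $\mathcal{N}_i$'' strategy.

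The main obstacle I anticipate is making the $e>0$ converse bound rigorous in the precise form ``success probability $\leq 2^{ne}/N_n$ up to vanishing corrections.'' A strong converse for EA classical capacity gives this immediately; if one only has the weak converse, one must argue more carefully, possibly by fixing a rate strictly between $e$ and $C_{EA}(\mathcal{N}_i) - (-\log p_i)$ and invoking the impossibility of reliable transmission above capacity, together with a Fano-type estimate to turn unreliability into a bound on the success probability. Everything else — the independence of the branch realizations across channel uses, the additivity of $-\log p_i$ under tensor products, and the use of the EA achievability theorem for $\mathcal{N}_i^{\otimes n}$ — is standard.
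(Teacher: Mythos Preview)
Your proposal is essentially identical to the paper's own proof: the same proof by contradiction, the same ``pretend it is $\mathcal{N}_i$'' coding strategy lower-bounding the success probability by $p_i^n(1-\delta)$, and the same comparison against $1/N_n$ (respectively $2^{ne}/N_n$) coming from the zero-capacity (respectively capacity-$e$) converse. The obstacle you flag about making the $2^{ne}/N_n$ converse bound precise is real, and the paper treats it at exactly the same heuristic level you do.
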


\begin{proof}We first prove the case where $\mathcal{N}$ is a complete erasure channel $(e=0.)$
From the masking property, no information should be conveyed over a universal quantum masker to each party. Assume that for some $i$,
\begin{equation}
    C_{EA}(\mathcal{N}_i) > -\log p_i.
\end{equation}
Then there exist $\epsilon, \delta > 0$ for all positive integer $n>0$ such that
\begin{equation}
    p_i^n (1-\delta) > \frac{1}{2^{n(1-\epsilon)C_{EA}(\mathcal{N}_i)}}.
\end{equation}
We now consider an information transfer task between an encoder and a decoder. Encoder uniformly samples an arbitrary letter $L$ from $N$ possible candidates, then encode and transfer it by using the channel $\mathcal{N}$ multiple times with the assistance of unbounded amount of entanglement. Since $\mathcal{N}$ is a completely lossy channel, decoder should have no information at all about the letter $L$. Therefore the probability of decoder to guess the letter $L$ correctly should never exceed $\frac{1}{N}$.

Nevertheless if the encoder and the decoder chose a strategy in which they always assume that the channel is $\mathcal{N}_i$ instead of $\mathcal{N}$, then because of the achievability theorem, with the probability that is no less than $p_i^n(1-\delta)$, the decoder can guess the letter $L$ uniformly sampled from $2^{n(1-\epsilon)C_{EA}(\mathcal{N}_i)}$ possible alphabets, by using the channel $n$ times with sufficiently large $n$. However, as $p_i^n (1-\delta)> 2^{-n(1-\epsilon)C_{EA}(\mathcal{N}_i)}$ from the assumption, this contradicts the previous statement.

Now, for the case of capacity $e$, one can only have up to $2^{ne}$-fold probability enhancement for the letter with the lowest probability of correct guessing in the information transfer task, when using the channel $n$ times, compared to the complete erasure channel case. But still negation of the assumption yields the existence of an index $i$ for which positive $\epsilon$ and $\delta$ exist for any positive integer $n$ such that
\begin{equation}
    p_i^n (1-\delta) > \frac{2^{ne}}{2^{n(1-\epsilon)C_{EA}(\mathcal{N}_i)}}.
\end{equation}
It implies the existence of strategy of Alice and Bob achieving the probability strictly higher than the maximum probability.

\end{proof}

\begin{thm} \label{thm_a:tradeoff} For any $d-$dimensional  universal quantum masking process $\Phi_M$ composed of random bipartite embeddings $\{M_i\}_{i\in I}$ with orthogonal images, the following inequality holds.
\begin{equation} \label{ineq:trade}
    \mathcal{I}_\infty (\{M_i\}_{i\in I}) \leq \mathcal{R}(\Phi_M).
\end{equation}
\end{thm}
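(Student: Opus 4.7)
The plan is to first prove a one-shot version of the inequality and then regularize via tensor powers. Two ingredients are already in hand: Theorem~\ref{thm:inq}, which after using $S(R)_i=\log d$ on a maximally entangled reference $\ket{\Gamma}_{RI}$ together with $I_i:=\max\{I(R:A)_i,I(R:B)_i\}=\log d+|S(A)_i-S(B)_i|$ can be rewritten as $\sum_i p_i I_i\le \mathcal{R}(\Phi_M)$; and Theorem~\ref{thm:cap}, which for a perfect masker ($e=0$) gives $p_i\le 2^{-I_i}$ for every $i$.

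The one-shot claim I would aim for is
\begin{equation*}
    \mathcal{I}_1(\{M_i\}) \le \mathcal{R}(\Phi_M) + \frac{2\log d}{d}.
\end{equation*}
The true safe-state spectrum $\{p_i\}$ is feasible for the linear program $\min_{\{q_i\}}\sum_i q_i I_i$ under the constraints $q_i\le 2^{-I_i}$ and $\sum_i q_i = 1$, so the LP's optimum is also bounded by $\mathcal{R}(\Phi_M)$. Because the coefficients $I_i$ are positive and each variable is individually capped, the minimum is attained by greedy waterfilling: sort the indices so that $I_i$ is nondecreasing, saturate $q_i = 2^{-I_i}$ until the cumulative mass first reaches $1$ at some index $i_0$, and place the residue $q_{i_0} = 1 - \sum_{i\in S_0}2^{-I_i}$ in that last slot. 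The resulting $S := S_0 \cup \{i_0\}$ and distinguished index $i_0$ are exactly the pair featured in the definition~(\ref{eqn:div}) of $\mathcal{I}_1$, and the optimum value is $\sum_{i\in S_0}2^{-I_i}I_i + q_{i_0} I_{i_0} \ge H(\{2^{-I_i}\}_{i\in S_0})$, so $H(\{2^{-I_i}\}_{i\in S_0}) \le \mathcal{R}(\Phi_M)$.

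To connect this with the full-set entropy appearing in $\mathcal{I}_1$, I would use the identity $H(\{2^{-I_i}\}_{i\in S}) = H(\{2^{-I_i}\}_{i\in S_0}) + 2^{-I_{i_0}} I_{i_0}$. The information conservation law applied to each subchannel forces $I_i \ge \log d$ for every $i$, and the elementary estimate $2^{-x}x \le 2d^{-1}\log d$ on $[\log d, 2\log d]$ then bounds the extra term by $2d^{-1}\log d$. Maximising over $\ket{\Gamma}_{RI}$ completes the one-shot inequality. I would then regularize: $\Phi_M^{\otimes n}$ is a $d^n$-dimensional universal masker with safe state $\sigma_S^{\otimes n}$, so $\mathcal{R}$ is additive, $\mathcal{R}(\Phi_M^{\otimes n}) = n\mathcal{R}(\Phi_M)$; applying the one-shot bound to $\Phi_M^{\otimes n}$ and dividing by $n$ sends the correction $\tfrac{2n\log d}{nd^n}$ to zero as $n\to\infty$, yielding $\mathcal{I}_\infty(\{M_i\}) \le \mathcal{R}(\Phi_M)$.

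The only real obstacle is cosmetic rather than structural: verifying that the greedy LP optimum literally coincides with the $(S, i_0)$ pair in~(\ref{eqn:div}). This reduces to checking that the two termination conditions written there, $\sum_{i\in S} 2^{-I_i} \ge 1$ and $\sum_{i \in S \setminus \{i_0\}} 2^{-I_i} \le 1$, describe exactly the last step of the greedy sweep, so that the waterfilling solution is admissible; once the indices have been sorted by $I_i$ this bookkeeping is straightforward.
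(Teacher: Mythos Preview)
Your proposal is correct and follows essentially the same route as the paper's own proof: establish the one-shot bound $\mathcal{I}_1\le\mathcal{R}(\Phi_M)+2d^{-1}\log d$ by combining Theorem~\ref{thm:inq} with the constraint $p_i\le 2^{-I_i}$ from Theorem~\ref{thm:cap}, solve the resulting linear program by greedy waterfilling to produce a feasible $(S,i_0)$ in the sense of~(\ref{eqn:div}), control the single leftover term by $2^{-I_{i_0}}I_{i_0}\le 2d^{-1}\log d$, and then regularize using additivity of $\mathcal{R}$. The only place where you are slightly more explicit than the paper is in spelling out the LP/waterfilling interpretation and the identity $H(\{2^{-I_i}\}_{i\in S})=H(\{2^{-I_i}\}_{i\in S_0})+2^{-I_{i_0}}I_{i_0}$, but the logical skeleton is identical.
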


\begin{proof}
    We first prove the following seemingly weaker inequality,
    \begin{equation} \label{ineq:oneshot}
        \mathcal{I}_1 (\{M_i\}_{i\in I})\leq \mathcal{R}(\Phi_M) + \frac{2\log d}{d}.
    \end{equation}
    From Theorem 3 of the main text,
    \begin{equation} \label{eqn:mid}
    \log d + \sum_i p_i |S(A)_i - S(B)_i| \leq \mathcal{R}(\Phi_M),
    \end{equation}
    and that $I_i = \log d + |S(A)_i - S(B)_i|$, we have
    \begin{equation} \label{ineq:int}
        \max_{\ket{\Gamma}} \min_{\{p_i\}}\sum_i p_i I_i \leq \mathcal{R}(\Phi_M),
    \end{equation}
    where the maximization is over every bipartite state $\ket{\Gamma}_{RI}$ and the minimization is over every possible probability distribution $\{p_i\}$ that satisfies $p_i \leq 2^{-I_i}$ for each $i$. When $S$ is an index set that saturates the minimization in the definition of $\mathcal{I}_{1} (\{M_i\}_{i \in I})$,
    \begin{equation} \label{eqn:div}
    \mathcal{I}_{1} (\{M_i\}_{i \in I}) = \max_{\ket{\Gamma}} \min_{S \subseteq I} H\left(\left\{\frac{1}{2^{I_i}}\right\}_{i\in S}\right),
    \end{equation}
    the minimization term in (\ref{ineq:int}) with such probability assignment is larger than the Shannon entropy $H(\{2^{-I_i}\}_{i \in S_0})$ with the index set $S_0 := S \setminus \{i_0\}$, because the probability distribution saturating the minimization in (\ref{ineq:oneshot}) is assigning $p_i=2^{-I_i}$ to indices in some $S'$ with the lowest values of $I_i$ until $\sum_{i \in S'} 2^{-I_i} \leq 1$ and assigning the probability $p_{i_0}=1-\sum_{i \in S'} 2^{-I_i}$ to the index with the next smallest $I_{i_0}$ and the fact that $\{2^{-I_i}\}_{i \in S_0}$ is an incomplete probability distribution. Combined with the fact that any term $2^{-I_i} I_i$ is not larger than $2d^{-1}\log d$, this yields the wanted result (\ref{ineq:oneshot}). By noting that $\mathcal{R}(\Phi_M)$ is additive, that is, $\mathcal{R}(\Phi_M^{\otimes n})= n\mathcal{R}(\Phi_M)$, regularizing both sides of (\ref{ineq:oneshot}) yields (\ref{ineq:trade}).
\end{proof}

\bibliography{main}

\end{document}